\documentclass[proceedings]{stacs}
\stacsheading{2009}{601--612}{Freiburg}
\firstpageno{601}

\newcommand{\MNP}{\text{\textsc{MAX NP}}}
\newcommand{\MSNP}{\text{\textsc{MAX SNP}}}
\newcommand{\MINF}{\text{\textsc{MIN F}}^+\Pi}
\newcommand{\A}{\mathcal{A}}
\newcommand{\F}{\mathcal{F}}
\newcommand{\Q}{\mathcal{Q}}
\renewcommand{\S}{\mathcal{S}}
\newcommand{\h}{\mathcal{H}}

\newcommand{\x}{\mathbf{x}}
\newcommand{\y}{\mathbf{y}}
\newcommand{\z}{\mathbf{z}}
\newcommand{\pQ}{p\text{-}\mathcal{Q}}

\newcommand{\N}{\mathbb{N}}

\DeclareMathOperator{\opt}{opt}

\newcommand{\hide}[1]{}

\begin{document}

\title[Polynomial Kernelizations]{Polynomial Kernelizations for $\MINF_1$ and $\MNP$}

\author[lab1]{S. Kratsch}{Stefan Kratsch}
\address[lab1]{Max-Planck-Institut f\"ur Informatik, Campus E 1 4, 66123 Saarbr\"ucken, Germany}  
\email{skratsch@mpi-inf.mpg.de}  
\urladdr{http://www.mpi-inf.mpg.de/\textasciitilde skratsch}  

\keywords{parameterized complexity, kernelization, approximation algorithms}
\subjclass{F.2.2}

\begin{abstract}
\noindent The relation of constant-factor approximability to fixed-parameter tractability and kernelization is a long-standing open question. We prove that two large classes of constant-factor approximable problems, namely~$\MINF_1$ and~$\MNP$, including the well-known subclass~$\MSNP$, admit polynomial kernelizations for their natural decision versions. This extends results of Cai and Chen (JCSS 1997), stating that the standard parameterizations of problems in~$\MSNP$ and~$\MINF_1$ are fixed-parameter tractable, and complements recent research on problems that do not admit polynomial kernelizations (Bodlaender et al.\ ICALP 2008).
\end{abstract}

\maketitle

\section{Introduction}
The class~APX consists of all~NP optimization problems that are approximable to within a constant factor of the optimum. It is known that the decision versions of most~APX-problems are fixed-parameter tractable or even admit efficient preprocessing in the form of a polynomial kernelization. How strong is the relation between constant-factor approximability and polynomial kernelizability? Is there a property inherent to most~APX-problems that explains this relation? What is the nature of~APX-problems that do not admit a polynomial kernelization, such as \textsc{Bin Packing} for example?

Since many prominent~APX-problems are complete under approximation preserving reductions and do not admit arbitrarily small approximation ratios, studying their parameterized complexity is a natural approach to obtain better results (recently Cai and Huang presented fixed-parameter approximation schemes for~$\MSNP$~\cite{Cai06}). In conjunction with recent work on problems without polynomial kernelizations, positive answers to the questions may provide evidence against~APX-membership for some problems (e.g.\ \textsc{Treewidth}).

\vskip5pt
\noindent\textbf{Our work:} We prove that the standard parameterizations of problems in two large classes of constant-factor approximable problems, namely~$\MINF_1$ and~$\MNP$, admit polynomial kernelizations. This extends results of Cai and Chen~\cite{Cai1997} who showed that the standard parameterizations of all problems in~$\MINF_1$ and~$\MSNP$ (a subclass of~$\MNP$) are fixed-parameter tractable.\footnote[1]{The existence of a kernelization, not necessarily polynomial, is equivalent to fixed-parameter tractability.} Interestingly perhaps, both our results rely on the Sunflower Lemma due to Erd\H os and Rado~\cite{Erdos1960}.
\newpage
\begin{table}[t]
\begin{tabular}{|l|c|c|}
\hline
 & Approximation ratio & Kernel size\\
\hline
\textsc{Minimum Vertex Cover} & $2$ \cite{Halperin2000} & $O(k)$ \cite{Chen1999}\\
\textsc{Feedback Vertex Set} & $2$ \cite{Bafna1999}& $O(k^3)$ \cite{Bodlaender2007}\\
\textsc{Minimum Fill-In} & $O(\opt)$ \cite{Natanzon2000} & $O(k^2)$ \cite{Natanzon2000}\\
\textsc{Treewidth} & $O(\sqrt{\log \opt})$ \cite{Feige2008} & not poly\footnotemark[2] \cite{Bodlaender2008}\\
\hline
\end{tabular}
\caption{\label{table:examples}Approximation ratio and size of problem kernels for some optimization problems.}
\end{table}

\vskip5pt
\noindent\textbf{Related work:} Recently Bodlaender et al.~\cite{Bodlaender2008} presented the first negative results concerning the existence of polynomial kernelizations for some natural fixed-parameter tractable problems. Using the notion of a \emph{distillation algorithm} and results due to Fortnow and Santhanam~\cite{Fortnow2008}, they were able to show that the existence of polynomial kernelizations for so-called \emph{compositional} parameterized problems implies a collapse of the polynomial hierarchy to the third level. These are seminal results presenting the first super-linear lower bounds for kernelization and relating a statement from parameterized complexity to a hypothesis from classical complexity theory.

In Table~$\ref{table:examples}$ we summarize approximability and kernelization results for some well-known problems.

\vskip5pt
\noindent\textbf{$\MINF_1$ and~$\MNP$:} Two decades ago Papadimitriou and Yannakakis~\cite{Papadimitriou1991} initiated the syntactic study of optimization problems to extend the understanding of approximability. They introduced the classes~$\MNP$ and~$\MSNP$ as natural variants of~NP based on Fagin's~\cite{Fagin1974} syntactic characterization of~NP. Essentially problems are in~$\MNP$ or~$\MSNP$ if their optimum value can be expressed as the maximum number of tuples for which some existential, respectively quantifier-free, first-order formula holds. They showed that every problem in these two classes is approximable to within a constant factor of the optimum. Arora et al.\ complemented this by proving that no~$\MSNP$-complete problem has a polynomial-time approximation scheme, unless~P=NP~\cite{Arora1998}. Contained in~$\MSNP$ there are some well-known maximization problems, such as \textsc{Max Cut}, \textsc{Max~$q$-Sat}, and \textsc{Independent Set} on graphs of bounded degree. Its superclass~$\MNP$ also contains \textsc{Max Sat} amongst others.

Kolaitis and Thakur generalized the approach of examining the logical definability of optimization problems and defined further classes of minimization and maximization problems~\cite{Kolaitis1994,Kolaitis1995}. Amongst others they introduced the class~$\MINF_1$ of problems whose optimum can be expressed as the minimum weight of an assignment (i.e.\ number of ones) that satisfies a certain universal first-order formula. They proved that every problem in~$\MINF_1$ is approximable to within a constant factor of the optimum. In~$\MINF_1$ there are problems like \textsc{Vertex Cover},~$d$-\textsc{Hitting Set}, and \textsc{Triangle Edge Deletion}.

\footnotetext[2]{Treewidth does not admit a polynomial kernelization unless there is a distillation algorithm for all coNP complete problems~\cite{Bodlaender2008}. Though unlikely, this is not known to imply a collapse of the polynomial hierarchy.}

\vskip5pt
Section~$\ref{preliminaries}$ covers the definitions of the classes~$\MINF_1$ and~$\MNP$, as well as the necessary details from parameterized complexity. In Sections~$\ref{section:minf}$ and~$\ref{section:maxnp}$ we present polynomial kernelizations for the standard parameterizations of problems in~$\MINF_1$ and~$\MNP$ respectively. Section~$\ref{conclusion}$ summarizes our results and poses some open problems.

\section{Preliminaries}\label{preliminaries}

\noindent\textbf{Logic and complexity classes:}  A (relational) vocabulary is a set~$\sigma$ of relation symbols, each having some fixed integer as its arity. 
Atomic formulas over~$\sigma$ are of the form~$R(z_1,\dots,z_t)$ where~$R$ is a~$t$-ary relation symbol from~$\sigma$ and the~$z_i$ are variables. The set of quantifier-free (relational) formulas over~$\sigma$ is the closure of the set of all atomic formulas under negation, conjunction, and disjunction.

\begin{defi}[$\MINF_1$,~$\MNP$]\label{def:classes}
A \emph{finite structure of type~$(r_1,\dots,r_t)$} is a tuple~$\A=(A,R_1,\dots,R_t)$ where~$A$ is a finite set and each~$R_i$ is an~$r_i$-ary relation over~$A$.

Let~$\Q$ be an optimization problem on finite structures of type~$(r_1,\dots,r_t)$. Let~$R_1,\dots,R_t$ be relation symbols of arity~$r_1,\dots,r_t$.

\noindent(a) The problem~$\Q$ is contained in the class~$\MINF_1$ if its optimum on finite structures~$\A$ of type~$(r_1,\dots,r_t)$ can be expressed as
$$\opt_\Q(\A)=\min_S\lbrace|S|:(\A,S)\models(\forall\x\in A^{c_x}):\psi(\x,S)\rbrace,$$
where~$S$ is a single relation symbol and~$\psi(\x,S)$ is a quantifier-free formula in conjunctive normal form over the vocabulary~$\lbrace R_1,\dots,R_t,S\rbrace$ on variables~$\lbrace x_1,\dots,x_{c_x}\rbrace$. Furthermore,~$\psi(\x,S)$ is positive in~$S$, i.e.~$S$ does not occur negated in~$\psi(\x,S)$.

\noindent(b) The problem~$\Q$ is contained in the class~$\MNP$ if its optimum on finite structures~$\A$ of type~$(r_1,\dots,r_t)$ can be expressed as
$$\opt_\Q(\A)=\max_\S\left\vert\left\lbrace\x\in A^{c_x}:(\A,\S)\models(\exists\y\in A^{c_y}):\psi(\x,\y,\S)\right\rbrace\right\vert,$$
where~$\S=(S_1,\dots,S_u)$ is a tuple of~$s_i$-ary relation symbols~$S_i$ and~$\psi(\x,\y,\S)$ is a quantifier-free formula in disjunctive normal form over the vocabulary~$\lbrace R_1,\dots,R_t,S_1,\dots,S_u\rbrace$ on variables~$\lbrace x_1,\dots,x_{c_x},y_1,\dots,y_{c_y}\rbrace$.
\end{defi}

\begin{remark}
The definition of~$\MSNP$ is similar to that of~$\MNP$ but without the existential quantification of~$\y$, i.e~$\opt_\Q(\A)=\max_\S|\lbrace\x:(\A,\S)\models\psi(\x,\S)\rbrace|$.
\end{remark}

\begin{example}[\textsc{Minimum Vertex Cover}] Let~$G=(V,E)$ be a finite structure of type~$(2)$ that represents a graph by a set~$V$ of vertices and a binary relation~$E$ over~$V$ as its edges. The optimum of \textsc{Minimum Vertex Cover} on structures~$G$ can be expressed as:
$$\opt_{VC}(G)=\min_{S\subseteq V}\lbrace|S|:(G,S)\models(\forall(u,v)\in V^2):(\neg E(u,v)\vee S(u)\vee S(v))\rbrace.$$
This implies that \textsc{Minimum Vertex Cover} is contained in~$\MINF_1$.
\end{example}

\begin{example}[\textsc{Maximum Satisfiability}] Formulas in conjunctive normal form can be represented by finite structures~$\F=(F,P,N)$ of type~$(2,2)$: Let~$F$ be the set of all clauses and variables, and let~$P$ and~$N$ be binary relations over~$F$. Let~$P(x,c)$ be true if and only if~$x$ is a literal of the clause~$c$ and let~$N(x,c)$ be true if and only if~$\neg x$ is a literal of the clause~$c$. The optimum of \textsc{Max Sat} on structures~$\F$ can be expressed as:  
$$\opt_{MS}(\F)=\\\max_{T\subseteq F}|\lbrace c\in F:(\F,T)\models(\exists x\in F):(P(x,c)\wedge T(x))\vee(N(x,c)\wedge\neg T(x))\rbrace|.$$
Thus \textsc{Max Sat} is contained in~$\MNP$.
\end{example}

For a detailed introduction to~$\MINF_1$,~$\MNP$, and~$\MSNP$ we refer the reader to~\cite{Kolaitis1994,Kolaitis1995,Papadimitriou1991}. An introduction to logic and complexity can be found in~\cite{Papadimitriou1993}.

\vskip4pt
\noindent\textbf{Parameterized complexity:}  The field of parameterized complexity, pioneered by Downey and Fellows, is a two-dimensional approach of coping with combinatorially hard problems. Parameterized problems come with a parameterization that maps input instances to a parameter value. The time complexity of algorithms is measured with respect to the input size and the parameter. In the following we give the necessary formal definitions, namely fixed-parameter tractability, standard parameterizations, and kernelization.

\begin{defi}[Fixed-parameter tractability]
A \emph{parameterization} of~$\Sigma^*$ is a polynomial-time computable mapping~$\kappa:\Sigma^*\to\N$. A \emph{parameterized problem} over an alphabet~$\Sigma$ is a pair~$(\Q,\kappa)$ consisting of a set~$\Q\subseteq\Sigma^*$ and a parameterization~$\kappa$ of~$\Sigma^*$.

A parameterized problem~$(\Q,\kappa)$ is \emph{fixed-parameter tractable} if there exists an algorithm~$\mathbb{A}$, a polynomial~$p$, and a computable function~$f:\N\to\N$ such that~$\mathbb{A}$ decides~$x\in\Q$ in time~$f(\kappa(x))\cdot p(|x|)$. FPT is the class of all fixed-parameter tractable problems.
\end{defi}

\begin{defi}[Standard parameterization]
Let~$\Q$ be a maximization (minimization) problem. The \emph{standard parameterization} of~$\Q$ is~$\pQ=(d\text{-}\mathcal{Q},\kappa)$ where~$\kappa:(\A,k)\mapsto k$ and~$d\text{-}\mathcal{Q}$ is the language of all tuples~$(\A,k)$ such that~$\opt_\Q(\A)\geq k$ ($\opt_\Q(\A)\leq k$).
\end{defi}

Basically~$d\text{-}\mathcal{Q}$ is the decision version of~$\Q$, asking whether the optimum is at least~$k$ (respectively at most~$k$). The standard parameterization of~$\Q$ is~$d\text{-}\mathcal{Q}$ parameterized by~$k$.

\begin{defi}[Kernelization]
Let~$(\Q,\kappa)$ be a parameterized problem over~$\Sigma$. A poly\-nomi\-al-time computable function~$K:\Sigma^*\to\Sigma^*$ is a \emph{kernelization} of~$(\Q,\kappa)$ if there is a computable function~$h:\N\to\N$ such that for all~$x\in\Sigma^*$ we have
$$(x\in\Q\Leftrightarrow K(x)\in\Q)\text{ and }|K(x)|\leq h(\kappa(x)).$$
We call~$h$ the \emph{size} of the \emph{problem kernel}~$K(x)$. The kernelization~$K$ is \emph{polynomial} if~$h$ is a polynomial. We say that~$(\Q,\kappa)$ \emph{admits a (polynomial) kernelization} if there exists a (polynomial) kernelization of~$(\Q,\kappa)$.
\end{defi}

Essentially, a kernelization is a polynomial-time data reduction that comes with a guaranteed upper bound on the size of the resulting instance in terms of the parameter.

For an introduction to parameterized complexity we refer the reader to~\cite{DF2006,Flum2006,Niedermeier2006}.

\vskip4pt
\noindent\textbf{Hypergraphs and sunflowers:} We assume the reader to be familiar with the basic graph notation. A \emph{hypergraph} is a tuple~$\h=(V,E)$ consisting of a finite set~$V$, its vertices, and a family~$E$ of subsets of~$V$, its edges. A hypergraph has \emph{dimension}~$d$ if each edge has cardinality at most~$d$. A hypergraph is~$d$-\emph{uniform} if each edge has cardinality exactly~$d$.

\begin{defi}[Sunflower]
Let~$\h$ be a hypergraph. A \emph{sunflower} of cardinality~$r$ is a set~$F=\lbrace f_1,\dots,f_r\rbrace$ of edges of~$\h$ such that every pair has the same intersection~$C$, i.e.\ for all~$1\leq i<j\leq r$: $f_i\cap f_j=C$.
The set~$C$ is called the \emph{core} of the sunflower.
\end{defi}

Note that any family of pairwise disjoint sets is a sunflower with core~$C=\emptyset$.

\begin{lem}[Sunflower Lemma~\cite{Erdos1960}]
Let~$k,d\in\N$ and let~$\h$ be a~$d$-uniform hypergraph with more than~$(k-1)^d\cdot d!$ edges. Then there is a sunflower of cardinality~$k$ in~$\h$. For every fixed~$d$ there is an algorithm that computes such a sunflower in time polynomial in~$|E(\h)|$.
\end{lem}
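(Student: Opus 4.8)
The plan is to prove the combinatorial statement by induction on the dimension~$d$, and to observe that since~$d$ is fixed the inductive argument translates directly into a polynomial-time algorithm. We may assume~$k\geq 2$, since for~$k=1$ any single edge is a sunflower of cardinality~$1$, and~$d\geq 1$.

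For the base case~$d=1$, every edge of~$\h$ is a singleton; having more than~$k-1$ distinct singletons means there are~$k$ pairwise disjoint ones, which form a sunflower with core~$C=\emptyset$. For the inductive step, assume the statement for dimension~$d-1$. First I would greedily compute an inclusion-maximal collection~$f_1,\dots,f_m$ of pairwise disjoint edges of~$\h$. If~$m\geq k$, then any~$k$ of these edges form a sunflower with core~$\emptyset$ and we are done. Otherwise~$m\leq k-1$, so the union~$U=f_1\cup\dots\cup f_m$ has~$|U|\leq (k-1)d$ vertices, and by maximality every edge of~$\h$ meets~$U$. Hence by averaging there is a vertex~$v\in U$ contained in at least
$$\frac{|E(\h)|}{(k-1)d}>\frac{(k-1)^d\cdot d!}{(k-1)d}=(k-1)^{d-1}\cdot(d-1)!$$
edges of~$\h$. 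Let~$\h'$ be the hypergraph whose edges are the sets~$f\setminus\{v\}$ for all edges~$f$ of~$\h$ with~$v\in f$. Since~$\h$ is~$d$-uniform, each such~$f$ has exactly~$d$ elements, so~$\h'$ is~$(d-1)$-uniform, and distinct edges containing~$v$ remain distinct after deleting~$v$, so~$\h'$ has more than~$(k-1)^{d-1}(d-1)!$ edges. By the induction hypothesis~$\h'$ contains a sunflower~$\{g_1,\dots,g_k\}$ with some core~$C$. Then~$\{g_1\cup\{v\},\dots,g_k\cup\{v\}\}$ is a sunflower of cardinality~$k$ in~$\h$ with core~$C\cup\{v\}$, because any two of these sets intersect in exactly~$C\cup\{v\}$.

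For the algorithmic claim, a maximal family of pairwise disjoint edges is found greedily in time polynomial in~$|E(\h)|$, the popular vertex~$v$ is found by counting, and constructing~$\h'$ is straightforward; the recursion has depth at most~$d$, and since~$d$ is fixed the overall running time is polynomial in~$|E(\h)|$.

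I do not expect a genuine obstacle here; the only points requiring care are bookkeeping ones: verifying that~$f\mapsto f\setminus\{v\}$ restricts to a bijection from the edges of~$\h$ through~$v$ onto the edges of~$\h'$ (so that the edge count is preserved), and checking that re-inserting~$v$ into the recursively obtained sunflower preserves the pairwise-intersection property. Both are immediate from~$d$-uniformity and the definition of a sunflower.
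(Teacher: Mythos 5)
Your proof is correct: it is the standard Erd\H{o}s--Rado induction on~$d$ (maximal disjoint subfamily, averaging over its union to find a popular vertex~$v$, link at~$v$, recurse), and the algorithmic claim follows as you say because the recursion depth is~$d$ and each level is a greedy scan plus a degree count. The paper itself does not prove this lemma --- it cites~\cite{Erdos1960} and only proves the Corollary extending it to~$d$-dimensional hypergraphs --- so there is no in-paper argument to compare against; your write-up is the canonical one, and the only points worth being fussy about (edges of~$\h$ are distinct sets, so deleting~$v$ is injective on the edges through~$v$; the case~$k=1$ is trivial) are exactly the ones you already flagged.
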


\begin{cor}[Sunflower Corollary]
The same holds for~$d$-dimensional hypergraphs with more than~$(k-1)^d\cdot d!\cdot d$ edges.
\end{cor}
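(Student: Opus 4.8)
The plan is to reduce to the uniform case by splitting the edge set according to edge cardinality and then invoking the Sunflower Lemma on whichever part is large enough. We may assume $k\geq 2$: if $k\leq 1$ the claim is immediate, since then the bound only asks for a nonempty edge set and a single edge already forms a sunflower of cardinality $k$. We may also assume $d\geq 1$ and that $\h$ has no empty edge, as a single empty edge can be discarded at the cost of one edge.

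First I would partition the edges of $\h=(V,E)$ into classes $E_1,\dots,E_d$, where $E_j$ collects exactly the edges of cardinality $j$. This partition is well defined because $\h$ has dimension $d$, and it is computable in time polynomial in $|E|$. Since $|E|>(k-1)^d\cdot d!\cdot d$ and there are at most $d$ nonempty classes, the pigeonhole principle yields an index $j\in\{1,\dots,d\}$ with $|E_j|>(k-1)^d\cdot d!$.

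Next I would observe that, because $k\geq 2$ and $j\leq d$, we have $(k-1)^j\leq (k-1)^d$ and $j!\leq d!$, hence $(k-1)^j\cdot j!\leq (k-1)^d\cdot d!<|E_j|$. The sub-hypergraph $(V,E_j)$ is $j$-uniform, so the Sunflower Lemma applies and produces a sunflower of cardinality $k$ among the edges in $E_j$; such a set is a fortiori a sunflower in $\h$, which proves the existence statement. For the algorithmic part, $d$ is fixed, so $j$ ranges over finitely many fixed values; computing the partition, selecting a sufficiently large class, and then running the fixed-dimension polynomial-time sunflower algorithm guaranteed by the Sunflower Lemma all take time polynomial in $|E|$.

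There is no genuine obstacle here; the only points requiring a little care are the monotonicity estimate $(k-1)^j\cdot j!\leq (k-1)^d\cdot d!$, which is precisely why the case $k\leq 1$ must be separated out, and the bookkeeping that a $d$-dimensional hypergraph decomposes into at most $d$ uniform pieces once a possible empty edge has been removed.
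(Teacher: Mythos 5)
Your proof is correct and takes essentially the same approach as the paper: partition the edges by cardinality, apply pigeonhole to find a class of more than $(k-1)^d\cdot d!$ edges, observe the monotonicity $(k-1)^j\cdot j!\leq(k-1)^d\cdot d!$, and invoke the Sunflower Lemma on the resulting uniform sub-hypergraph. The only difference is that you explicitly handle the degenerate cases $k\leq 1$ and empty edges, which the paper silently assumes away.
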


\begin{proof}
For some~$d'\in\lbrace1,\dots,d\rbrace$,~$\h$ has more than~$(k-1)^d\cdot d!\geq(k-1)^{d'}\cdot d'!$ edges of cardinality~$d'$. Let~$\h_{d'}$ be the~$d'$-uniform subgraph induced by the edges of cardinality~$d'$. We apply the Sunflower Lemma on~$\h_{d'}$ and obtain a sunflower~$F$ of cardinality~$k$ in time polynomial in~$|E(\h_{d'})|\leq |E(\h)|$. Clearly~$F$ is also a sunflower of~$\h$.
\end{proof}

\section{Polynomial kernelization for $\MINF_1$}\label{section:minf}

We will prove that the standard parameterization of any problem in~$\MINF_1$ admits a polynomial kernelization. The class~$\MINF_1$ was introduced by Kolaitis and Thakur in a framework of syntactically defined classes of optimization problems~\cite{Kolaitis1994}. In a follow-up paper they showed that every problem in~$\MINF_1$ is constant-factor approximable~\cite{Kolaitis1995}.

Throughout the section let~$\Q\in\MINF_1$ be an optimization problem on finite structures of type~$(r_1,\dots,r_t)$. Let~$R_1,\dots,R_t$ be relation symbols of arity~$r_1,\dots,r_t$ and let~$S$ be a relation symbol of arity~$c_S$. Furthermore, let~$\psi(\x,S)$ be a quantifier-free formula in conjunctive normal form over the vocabulary~$\lbrace R_1,\dots,R_t,S\rbrace$ on variables~$\lbrace x_1,\dots,x_{c_x}\rbrace$ that is positive in~$S$ such that
$$\opt_\Q(\A)=\min_{S\subseteq A^{c_S}}\lbrace\vert S\vert:(\A,S)\models(\forall\x\in A^{c_x}):\psi(\x,S)\rbrace.$$

Let~$s$ be the maximum number of occurrences of~$S$ in any clause of~$\psi(\x,S)$. The standard parameterization~$\pQ$ of~$\Q$ is the following problem:
\begin{flushleft}\begin{tabular}{ll}
	\textbf{Input:}&A finite structure~$\A$ of type~$(r_1,\dots,r_t)$ and an integer~$k$.\\
	\textbf{Parameter:}&$k$.\\
	\textbf{Task:}& Decide whether~$\opt_\Q(\A)\leq k$.
\end{tabular}\end{flushleft}

We will see that, given an instance~$(\A,k)$, deciding whether~$\opt_\Q(\A)\leq k$ is equivalent to deciding an instance of~$s$-\textsc{Hitting Set}.\footnote[3]{In literature the problem is often called~$d$-\textsc{Hitting Set} but we will need~$d=s$.}  Our kernelization will therefore make use of existing kernelization results for~$s$-\textsc{Hitting Set}. The parameterized version of~$s$-\textsc{Hitting Set} is defined as follows:

\begin{flushleft}\begin{tabular}{ll}
	\textbf{Input:}&A hypergraph~$\h=(V,E)$ of dimension~$s$ and an integer~$k$.\\
	\textbf{Parameter:}&$k$.\\
	\textbf{Task:}& Decide whether~$\h$ has a hitting set of size at most~$k$, i.e.~$S\subseteq V$,~$|S|\leq k$,\\& such that~$S$ has a nonempty intersection with every edge of~$\h$.
\end{tabular}\end{flushleft}

We consider the formula~$\psi(\x,S)$ and a fixed instance~$(\A,k)$, with~$\A=(A,R_1,\dots,R_t)$. For every tuple~$\x\in A^{c_x}$ we can evaluate all literals of the form~$R_i(\z)$ and~$\neg R_i(\z)$ for some~$\z\in\lbrace x_1,\dots,x_{c_x}\rbrace^{r_i}$. By checking whether~$\z\in R_i$, we obtain~$1$ (true) or~$0$ (false) for each literal. Then we delete all occurrences of~$0$ from the clauses and delete all clauses that contain a~$1$. For each~$\x$, we obtain an equivalent formula that we denote with~$\psi_\x(S)$. Each~$\psi_\x(S)$ is in conjunctive normal form on literals~$S(\z)$ for some~$\z\in\lbrace x_1,\dots,x_{c_x}\rbrace^{c_S}$ (no literals of the form~$\neg S(\z)$ since~$\psi(\x,S)$ is positive in~$S$).

\begin{remark}\label{remark:psix}
For all~$\x\in A^{c_x}$ and~$S\subseteq A^{c_S}$ it holds that~$(\A,S)\models\psi(\x,S)$ if and only if~$(\A,S)\models\psi_\x(S)$. Moreover, we can compute all formulas~$\psi_\x(S)$ for~$\x\in A^{c_x}$ in polynomial time, since~$c_x$ and the length of~$\psi(\x,S)$ are constants independent of~$\A$.
\end{remark}

Deriving a formula~$\psi_\x(S)$ can yield empty clauses. This happens when all literals~$R_i(\cdot)$,~$\neg R_i(\cdot)$ in a clause are evaluated to~$0$ and there are no literals~$S(\cdot)$. In that case, no assignment~$S$ can satisfy the formula~$\psi_\x(S)$, or equivalently~$\psi(\x,S)$. Thus~$(\A,k)$ is a no-instance. Note that clauses of~$\psi_\x(S)$ cannot contain contradicting literals since~$\psi(\x,S)$ is positive in~$S$.

\begin{remark}
From now on, we assume that all clauses of the formulas~$\psi_\x(S)$ are nonempty.
\end{remark}

We define a mapping~$\Phi$ from finite structures~$\A$ to hypergraphs~$\h$. Then we show that equivalent~$s$-\textsc{Hitting Set} instances can be obtained in this way. 
\begin{defi}\label{def:eqhs}
Let~$\A$ be an instance of~$\Q$. We define~$\Phi(\A):=\h$ with~$\h=(V,E)$. We let~$E$ be the family of all sets~$e=\lbrace \z_1,\dots,\z_p\rbrace$ such that~$(S(\z_1)\vee\dots\vee S(\z_p))$ is a clause of a~$\psi_\x(S)$ for some~$\x\in A^{c_x}$. We let~$V$ be the union of all sets~$e\in E$.
\end{defi}

\begin{remark}
The hypergraphs~$\h$ obtained from the mapping~$\Phi$ have dimension~$s$ since each~$\psi_\x(S)$ has at most~$s$ literals per clause. It follows from Remark~$\ref{remark:psix}$ that~$\Phi(\A)$ can be computed in polynomial time.
\end{remark}

The following lemma establishes that~$(\A,k)$ and~$(\h,k)=(\Phi(\A),k)$ are equivalent in the sense that~$(\A,k)\in\pQ$ if and only if~$(\h,k)\in s$-\textsc{Hitting Set}.

\begin{lem}\label{lemma:eqhs}Let~$\A=(A,R_1,\dots,R_t)$ be an instance of~$\Q$ then for all~$S\subseteq A^{c_S}$:
$$(\A,S)\models(\forall\x):\psi(\x,S)\text{ if and only if }S\text{ is a hitting set for }\h=\Phi(\A).$$
\end{lem}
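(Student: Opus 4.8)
The plan is to show that both sides of the equivalence are equivalent to a single intermediate statement: that $S$ satisfies every clause occurring in any of the formulas~$\psi_\x(S)$, $\x\in A^{c_x}$. The whole argument is a short chain of ``if and only if'' steps, so it will establish both directions at once.

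First I would unwind the left-hand side. By Remark~\ref{remark:psix}, $(\A,S)\models(\forall\x):\psi(\x,S)$ holds precisely when $(\A,S)\models\psi_\x(S)$ for every~$\x\in A^{c_x}$. Since each~$\psi_\x(S)$ is in conjunctive normal form, this in turn holds exactly when, for every~$\x$, the set~$S$ satisfies each of the clauses of~$\psi_\x(S)$ (all nonempty, by our standing assumption). Every such clause is a disjunction~$S(\z_1)\vee\dots\vee S(\z_p)$ of positive~$S$-literals --- no negated~$S$ occurs, as~$\psi$ is positive in~$S$ --- and it is satisfied by~$S$ if and only if~$\z_j\in S$ for some~$j$, i.e.\ if and only if~$S\cap\lbrace\z_1,\dots,\z_p\rbrace\neq\emptyset$.

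Next I would match this up with the hitting-set condition via Definition~\ref{def:eqhs}. By construction, the edge set~$E$ of~$\h=\Phi(\A)$ consists of exactly the sets~$\lbrace\z_1,\dots,\z_p\rbrace$ that arise as clauses~$S(\z_1)\vee\dots\vee S(\z_p)$ of some~$\psi_\x(S)$. Hence ``$S$ satisfies every clause of every~$\psi_\x(S)$'' is the same statement as ``$S\cap e\neq\emptyset$ for every~$e\in E$'', which is precisely the definition of~$S$ being a hitting set for~$\h$. Chaining this with the previous paragraph yields the claimed equivalence.

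The only point that needs a little care is that the passage from clauses to edges is many-to-one: a clause may list a tuple more than once, and clauses coming from different tuples~$\x$ may collapse to the same set of~$S$-literals. This is harmless, because whether~$S$ satisfies~$S(\z_1)\vee\dots\vee S(\z_p)$ depends only on the underlying set~$\lbrace\z_1,\dots,\z_p\rbrace$, so identifying each clause with that set --- as~$E$ does --- loses no information; and the standing assumption that all clauses of the~$\psi_\x(S)$ are nonempty guarantees that~$\h$ has no empty edge, so there is no degenerate case to treat separately. Beyond this bookkeeping I do not expect any real obstacle.
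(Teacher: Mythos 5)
Your proof is correct and follows exactly the paper's chain of equivalences: unwind via Remark~\ref{remark:psix}, observe that each clause of~$\psi_\x(S)$ is a disjunction of positive~$S$-literals satisfied iff~$S$ meets the corresponding set, and identify these sets with the edges of~$\Phi(\A)$ by Definition~\ref{def:eqhs}. The extra remark about the many-to-one collapse of clauses to edges is a small elaboration the paper leaves implicit, but it does not change the argument.
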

\begin{proof}
Let~$\h=\Phi(\A)=(V,E)$ and let~$S\subseteq A^{c_S}$:
$$\begin{array}{rl}
	&(\A,S)\models(\forall\x\in A^{c_x}):\psi(\x,S)\\
	\Leftrightarrow&(\A,S)\models(\forall\x\in A^{c_x}):\psi_\x(S)\\
	\Leftrightarrow&(\forall\x\in A^{c_x}):\text{each clause of }\psi_\x(S)\text{ has a literal }S(\z)\text{ for which }\z\in S\\
	\Leftrightarrow&S\text{ has a nonempty intersection with every set }e\in E\\
	\Leftrightarrow&S\text{ is a hitting set for }(V,E).\end{array}$$\vspace*{-0.8cm}

\end{proof}

Our kernelization will consist of the following steps:
\begin{enumerate}
\item Map the given instance~$(\A,k)$ for~$\pQ$ to an equivalent instance~$(\h,k)=(\Phi(\A),k)$ for~$s$-\textsc{Hitting Set} according to Definition~$\ref{def:eqhs}$ and Lemma~$\ref{lemma:eqhs}$.
\item Use a polynomial kernelization for~$s$-\textsc{Hitting Set} on~$(\h,k)$ to obtain an equivalent instance~$(\h',k)$ with size polynomial in~$k$.
\item Use~$(\h',k)$ to derive an equivalent instance~$(\A',k)$ of~$\pQ$. That way we will be able to conclude that~$(\A',k)$ is equivalent to~$(\h,k)$ and hence also to~$(\A,k)$.
\end{enumerate}

There exist different kernelizations for~$s$-\textsc{Hitting Set}: one by Flum and Grohe~\cite{Flum2006} based on the Sunflower Lemma due to Erd\H{o}s and Rado~\cite{Erdos1960}, one by Nishimura et al.~\cite{Nishimura2004} via a generalization of the Nemhauser-Trotter kernelization for \textsc{Vertex Cover}, and a recent one by Abu-Khzam~\cite{Abu-Khzam2007} based on crown decompositions. For our purposes of deriving an equivalent instance for~$\pQ$, these kernelizations have the drawback of shrinking sets during the reduction. This is not possible for our approach since we would need to change the formula~$\psi(\x,S)$ to shrink the clauses. We prefer to modify Flum and Grohe's kernelization such that it uses only edge deletions.

\begin{thm}\label{thm:weakkernel}
There exists a polynomial kernelization of~$s$-\textsc{Hitting Set} that, given an instance~$(\h,k)$, computes an instance~$(\h^*,k)$ such that~$E(\h^*)\subseteq E(\h)$,~$\h^*$ has~$O(k^s)$ edges, and the size of~$(\h^*,k)$ is~$O(k^s)$ as well.
\end{thm}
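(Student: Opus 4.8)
The plan is to adapt Flum and Grohe's Sunflower-Lemma-based kernelization for $s$-\textsc{Hitting Set} so that the only reduction operation is the deletion of an edge (rather than the usual operation of replacing a large sunflower by its core, which shrinks edges). First I would observe that if $\h$ has a sunflower $F=\{f_1,\dots,f_{k+1}\}$ of cardinality $k+1$ with core $C$, then in any hitting set $S$ of size at most $k$ we must have $S\cap C\neq\emptyset$: otherwise $S$ would have to meet each of the $k+1$ pairwise-disjoint ``petals'' $f_i\setminus C$, forcing $|S|\geq k+1$. Consequently, for \emph{any} edge $f_i$ in the sunflower, $S$ hits $f_i$ if and only if $S$ hits $C$, and hence if and only if $S$ hits every other $f_j$ in the sunflower. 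This means that deleting one edge of the sunflower, say $f_{k+1}$, yields an equivalent instance $(\h',k)$: a set $S$ with $|S|\le k$ hits all edges of $\h'$ if and only if it hits all edges of $\h$, because hitting the remaining sunflower edges already forces $S\cap C\ne\emptyset$, which in turn makes $f_{k+1}$ hit.

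Next I would turn this into a reduction rule: while $\h$ has more than $(k)^s\cdot s!\cdot s$ edges (the bound from the Sunflower Corollary applied with cardinality $k+1$), find a sunflower of cardinality $k+1$ in polynomial time via the Sunflower Corollary and delete one of its edges. Each application strictly decreases the number of edges, so after at most $|E(\h)|$ iterations—polynomially many—we reach a hypergraph $\h^*$ with at most $(k)^s\cdot s!\cdot s=O(k^s)$ edges (treating $s$ as a fixed constant depending only on the problem $\Q$, not on the input). By the argument above each deletion preserves the answer, so $(\h^*,k)\in s$-\textsc{Hitting Set} iff $(\h,k)\in s$-\textsc{Hitting Set}, and $E(\h^*)\subseteq E(\h)$ by construction. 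One should also restrict the vertex set to the union of the remaining edges (or simply keep it, since isolated vertices never help a hitting set and the size bound is dominated by the edge description anyway); either way, since each edge has at most $s$ vertices, the total encoding size of $(\h^*,k)$ is $O(s\cdot k^s)+O(\log k)=O(k^s)$.

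Finally I would address the polynomial running time more carefully: the Sunflower Corollary gives a polynomial-time algorithm for fixed $s$, and we invoke it at most $|E(\h)|$ times, so the whole procedure runs in time polynomial in $|\h|$; and if at any point $k<0$ or an empty edge appears we output a trivial no-instance. The main obstacle—and the reason this theorem needs to be stated and proved separately rather than just cited—is precisely the insistence on using \emph{only} edge deletions: the standard kernelization replaces a sunflower $\{f_1,\dots,f_{k+1}\}$ by the single edge $C$, which would correspond to shrinking clauses of $\psi(\x,S)$ and hence changing the fixed formula defining $\Q$, something the later reduction back to an instance of $\pQ$ cannot tolerate. The correctness argument for deletion is slightly more delicate than for contraction, because after deleting $f_{k+1}$ one must still be sure that the \emph{remaining} edges of the sunflower certify $S\cap C\neq\emptyset$ for every candidate $S$ of size at most $k$—which they do, since $k$ pairwise-disjoint petals already force this. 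Verifying that the size bound $O(k^s)$ is genuinely reached (i.e.\ that the loop cannot get stuck above the threshold) is then immediate from the Sunflower Corollary.
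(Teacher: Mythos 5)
Your overall strategy (modify the Flum--Grohe sunflower kernelization to use only edge deletions) is exactly the approach in the paper, but you have an off-by-one error in the sunflower cardinality that makes the reduction rule unsound.

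You propose to find a sunflower $F=\{f_1,\dots,f_{k+1}\}$ of cardinality $k+1$, delete one edge $f_{k+1}$, and argue that ``hitting the remaining sunflower edges already forces $S\cap C\neq\emptyset$'' for any $|S|\le k$. But after deleting $f_{k+1}$ the remaining sunflower has cardinality only $k$, so it has only $k$ pairwise-disjoint petals. A set $S$ of size exactly $k$ can hit all $k$ petals with one vertex each and avoid $C$ entirely, so nothing forces $S$ to meet $f_{k+1}$. Concretely, take $s=1$, $k=1$, $\h=\{\{a\},\{b\}\}$: this is a no-instance (a hitting set needs both vertices), but your rule finds the cardinality-$2$ sunflower with core $\emptyset$, deletes $\{b\}$, and produces $\h^*=\{\{a\}\}$, which is a yes-instance. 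The equivalence breaks.

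The paper avoids this by finding a sunflower of cardinality $k+2$ and deleting one edge, so that the \emph{remaining} $k+1$ edges still form a sunflower of cardinality $k+1$ and hence still force $S\cap C\neq\emptyset$ for any hitting set $S$ of the reduced instance with $|S|\le k$; since the deleted edge contains $C$, it is then automatically hit. Correspondingly the loop threshold must be $(k+1)^s\cdot s!\cdot s$, not $k^s\cdot s!\cdot s$, so that the Sunflower Corollary guarantees a sunflower of cardinality $k+2$ whenever the loop body executes. With those two changes (sunflower of cardinality $k+2$, threshold $(k+1)^s\cdot s!\cdot s$) your argument becomes correct and matches the paper's proof; the asymptotic bound $O(k^s)$ on the number of edges and on the encoding size is unaffected.
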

\begin{proof}
Let~$(\h,k)$ be an instance of~$s$-\textsc{Hitting Set}, with~$\h=(V,E)$. If~$\h$ contains a sunflower~$F=\lbrace f_1,\dots,f_{k+1}\rbrace$ of cardinality~$k+1$ then every hitting set of~$\h$ must have a nonempty intersection with the core~$C$ of~$F$ or with the~$k+1$ disjoint sets~$f_1\setminus C,\dots,f_{k+1}\setminus C$. Thus every hitting set of at most~$k$ elements must have a nonempty intersection with~$C$.

Now consider a sunflower~$F=\lbrace f_1,\dots,f_{k+1},f_{k+2}\rbrace$ of cardinality~$k+2$ in~$\h$ and let~$\h'=(V,E\setminus\lbrace f_{k+2}\rbrace)$. We show that the instances~$(\h,k)$ and~$(\h',k)$ are equivalent. Clearly every hitting set for~$\h$ is also a hitting set for~$\h'$ since~$E(\h')\subseteq E(\h)$. Let~$S\subseteq V$ be a hitting set of size at most~$k$ for~$\h'$. Since~$F\setminus\lbrace f_{k+2}\rbrace$ is a sunflower of cardinality~$k+1$ in~$\h'$, it follows that~$S$ has a nonempty intersection with its core~$C$. Hence~$S$ has a nonempty intersection with~$f_{k+2}\supseteq C$ too. Thus~$S$ is a hitting set of size at most~$k$ for~$\h$, implying that~$(\h,k)$ and~$(\h',k)$ are equivalent.

We start with~$\h^*=\h$ and repeat the following step while~$\h^*$ has more than~$(k+1)^{s}\cdot s!\cdot s$ edges. By the Sunflower Corollary we obtain a sunflower of cardinality~$k+2$ in~$\h^*$ in time polynomial in~$|E(\h^*)|$. We delete an edge of the detected sunflower from the edge set of~$\h^*$ (thereby reducing the cardinality of the sunflower to~$k+1$). Thus, by the argument from the previous paragraph, we maintain that~$(\h,k)$ and~$(\h^*,k)$ are equivalent.

Furthermore~$E(\h^*)\subseteq E(\h)$ and~$\h^*$ has no more than~$(k+1)^s\cdot s!\cdot s\in O(k^s)$ edges. Since we delete an edge of~$\h^*$ in each step, there are~$O(|E(\h)|)$ steps, and the total time is polynomial in~$|E(\h)|$. Deleting all isolated vertices from~$\h^*$ yields a size of~$O(s\cdot k^s)=O(k^s)$ since each edge contains at most~$s$ vertices.
\end{proof}

The following lemma proves that every~$s$-\textsc{Hitting Set} instance that is ``sandwiched'' between two equivalent instances must be equivalent to both.

\begin{lem}\label{lemma:swhs} Let~$(\h,k)$ be an instance of~$s$-\textsc{Hitting Set} and let~$(\h^*,k)$ be an equivalent instance with~$E(\h^*)\subseteq E(\h)$. Then for any~$\h'$ with~$E(\h^*)\subseteq E(\h')\subseteq E(\h)$ the instance~$(\h',k)$ is equivalent to~$(\h,k)$ and~$(\h^*,k)$.
\end{lem}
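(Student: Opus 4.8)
The plan is to exploit the obvious monotonicity of $s$-\textsc{Hitting Set} under edge deletion: deleting edges only removes constraints, so it can only turn a no-instance into a yes-instance, never the other way around. Concretely, I would first isolate the following claim. If $\mathcal{G}_1=(V_1,E_1)$ and $\mathcal{G}_2=(V_2,E_2)$ are hypergraphs with $E_1\subseteq E_2$, then $(\mathcal{G}_2,k)\in s$-\textsc{Hitting Set} implies $(\mathcal{G}_1,k)\in s$-\textsc{Hitting Set}. Indeed, if $S$ is a hitting set of size at most~$k$ for $\mathcal{G}_2$, then $S$ meets every edge in $E_2$, hence every edge in $E_1\subseteq E_2$; since in our setting the vertex set of a hypergraph is the union of its edges, each $e\in E_1$ satisfies $e\subseteq V_1$, so $S\cap V_1$ still meets every $e\in E_1$ and $|S\cap V_1|\le|S|\le k$. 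Thus $S\cap V_1$ witnesses $(\mathcal{G}_1,k)\in s$-\textsc{Hitting Set}.

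With this claim the two required implications are immediate. Suppose $(\h,k)$ is a yes-instance. Since $E(\h')\subseteq E(\h)$, the claim yields that $(\h',k)$ is a yes-instance. Conversely, suppose $(\h',k)$ is a yes-instance. Since $E(\h^*)\subseteq E(\h')$, the claim yields that $(\h^*,k)$ is a yes-instance, and since $(\h^*,k)$ is equivalent to $(\h,k)$ by hypothesis, $(\h,k)$ is a yes-instance. Hence $(\h',k)$ and $(\h,k)$ are equivalent; composing this with the assumed equivalence of $(\h,k)$ and $(\h^*,k)$ shows $(\h',k)$ is equivalent to both, which is the statement.

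There is essentially no obstacle in this argument. The only point that deserves a moment's care is that the vertex set of each hypergraph is the union of its edges, so removing edges may also remove vertices; this is harmless, because a smaller vertex set never destroys a hitting set — every surviving edge remains contained in the surviving vertex set, and restricting a hitting set to fewer vertices can only decrease its size. One could equally well carry out the whole argument over a fixed common vertex set and avoid mentioning this altogether.
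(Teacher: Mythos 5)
Your proof is correct and follows essentially the same route as the paper's: both rely on the monotonicity of hitting sets under edge deletion (projecting a hitting set of the larger edge set to the smaller one), applied once to the pair $\h\supseteq\h'$ and once to the pair $\h'\supseteq\h^*$, then combined with the assumed equivalence of $(\h,k)$ and $(\h^*,k)$. The extra care you take about the vertex set being the union of the edges is a fine point the paper glosses over, but it does not change the substance of the argument.
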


\begin{proof}
Observe that hitting sets for~$\h$ can be projected to hitting sets for~$\h'$ (i.e.\ restricted to the vertex set of~$\h'$) since~$E(\h')\subseteq E(\h)$. Thus if~$(\h,k)$ is a yes-instance then~$(\h',k)$ is a yes-instance too. The same argument holds for~$(\h',k)$ and~$(\h^*,k)$. Together with the fact that~$(\h,k)$ and~$(\h^*,k)$ are equivalent, this proves the lemma.
\end{proof}

Now we are well equipped to prove that~$\pQ$ admits a polynomial kernelization.

\begin{thm}
Let~$\Q\in\MINF_1$. The standard parameterization~$\pQ$ of~$\Q$ admits a polynomial kernelization.
\end{thm}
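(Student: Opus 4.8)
The plan is to carry out the three-step scheme announced above; its first two steps are already in hand. Given an instance~$(\A,k)$ of~$\pQ$, I would first test in polynomial time (Remark~\ref{remark:psix}) whether some~$\psi_\x(S)$ contains an empty clause; if so,~$(\A,k)$ is a no-instance and we may output a fixed trivial no-instance of constant size. Otherwise we form the dimension-$s$ hypergraph~$\h=\Phi(\A)$, which by Lemma~\ref{lemma:eqhs} is an equivalent instance of~$s$-\textsc{Hitting Set}, and apply the edge-deletion kernelization of Theorem~\ref{thm:weakkernel} to obtain~$(\h^*,k)$ with~$E(\h^*)\subseteq E(\h)$, with~$O(k^s)$ edges, and of size~$O(k^s)$.

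The heart of the argument is Step~3: turning~$\h^*$ back into a small structure. Since~$E(\h^*)\subseteq E(\h)=E(\Phi(\A))$, Definition~\ref{def:eqhs} lets us attach to each edge~$e\in E(\h^*)$ a witness tuple~$\x_e\in A^{c_x}$ such that the vertices of~$e$ are precisely the~$S$-literal arguments of one clause of~$\psi_{\x_e}(S)$; such witnesses can be recorded while computing~$\Phi(\A)$. Let~$B\subseteq A$ be the set of all elements (at most~$c_x$ per tuple) that occur in the tuples~$\x_e$, so~$|B|\le c_x\cdot|E(\h^*)|=O(k^s)$, and let~$\A'$ be the substructure of~$\A$ induced on~$B$. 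Because~$\A'$ and~$\A$ assign the same truth value to~$R_i(\z)$ for every tuple~$\z$ over~$B$, for each~$\x\in B^{c_x}$ the derived formula~$\psi_\x(S)$ is literally the same whether computed in~$\A'$ or in~$\A$; in particular~$\A'$ produces no empty clauses either. Reading off the edge sets via Definition~\ref{def:eqhs} then gives
$$E(\h^*)\ \subseteq\ E(\Phi(\A'))\ \subseteq\ E(\Phi(\A))\ =\ E(\h),$$
the first inclusion because every witness~$\x_e$ lies in~$B^{c_x}$, the second because~$B^{c_x}\subseteq A^{c_x}$. Lemma~\ref{lemma:swhs} now yields that~$(\Phi(\A'),k)$ is equivalent to~$(\h,k)$, and chaining Lemma~\ref{lemma:eqhs} for~$\A'$ and for~$\A$ shows~$(\A',k)\in\pQ\Leftrightarrow(\A,k)\in\pQ$. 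The kernelization outputs~$(\A',k)$.

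It then remains to check the kernel size and running time, which is routine. As~$s$,~$c_x$,~$t$ and the arities~$r_i$ are constants depending only on~$\Q$, encoding~$\A'=(B,R_1\cap B^{r_1},\dots,R_t\cap B^{r_t})$ needs space~$O(t\cdot|B|^{\max_i r_i})=O(k^{s\cdot\max_i r_i})$, which is polynomial in~$k$; and evaluating the~$\psi_\x$, running the Theorem~\ref{thm:weakkernel} kernelization, and reading~$B$ off the recorded witnesses are all polynomial-time in~$|\A|$. Hence~$K:(\A,k)\mapsto(\A',k)$ is a polynomial kernelization of~$\pQ$.

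The only genuine obstacle is Step~3, and inside it the requirement that rebuilding a structure from~$\h^*$ neither creates~$S$-clauses outside~$\h$ — which would break the sandwich Lemma~\ref{lemma:swhs} — nor introduces spurious empty clauses, which would flip the answer. Passing to the \emph{induced} substructure on the witness elements is exactly what handles both, since it preserves the evaluation of every clause of~$\psi$ verbatim on all tuples over~$B$; this is also why Theorem~\ref{thm:weakkernel} was arranged to only delete edges rather than shrink them.
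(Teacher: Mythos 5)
Your proof is correct and follows essentially the same three-step route as the paper: reduce to $s$-\textsc{Hitting Set} via $\Phi$, shrink with the edge-deletion kernelization of Theorem~\ref{thm:weakkernel}, then pick one witness tuple $\x_e$ per surviving edge and take the induced substructure $\A'$, using the sandwich Lemma~\ref{lemma:swhs} and Lemma~\ref{lemma:eqhs} (in both directions) to conclude equivalence. The only additions are cosmetic but welcome: you explicitly verify that $\psi_\x(S)$ evaluates identically in $\A'$ and $\A$ for $\x$ over the witness set (the paper leaves this implicit when stating $E(\h')\subseteq E(\h)$), and you make the empty-clause no-instance check an explicit first step rather than a standing assumption.
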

\begin{proof}
Let~$(\A,k)$ be an instance of~$\pQ$. By Lemma~$\ref{lemma:eqhs}$ we have that~$(\A,k)$ is a yes-instance of~$\pQ$ if and only if~$(\h,k)=(\Phi(\A),k))$ is a yes-instance of~$s$-\textsc{Hitting Set}. We apply the kernelization from Theorem~$\ref{thm:weakkernel}$ to~$(\h,k)$ and obtain an equivalent~$s$-\textsc{Hitting Set} instance~$(\h^*,k)$ such that~$E(\h^*)\subseteq E(\h)$ and~$\h^*$ has~$O(k^s)$ edges.

Recall that every edge of~$\h$, say~$\lbrace\z_1,\dots,\z_p\rbrace$, corresponds to a clause~$(S(\z_1)\vee\dots\vee S(\z_p))$ of~$\psi_\x(S)$ for some~$\x\in A^{c_x}$. Thus for each edge~$e\in E(\h^*)\subseteq E(\h)$ we can select a tuple~$\x_e$ such that~$e$ corresponds to a clause of~$\psi_{\x_e}(S)$. Let~$X$ be the set of the selected tuples~$\x_e$ for all edges~$e\in E(\h^*)$. Let~$A'\subseteq A$ be the set of all components of tuples~$\x_e\in X$, ensuring that~$X\subseteq A'^{c_x}$. Let~$R'_i$ be the restriction of~$R_i$ to~$A'$ and let~$\A'=(A',R'_1,\dots,R'_t)$.

Let~$(\h',k)=(\Phi(\A'),k)$. By definition of~$\Phi$ and by construction of~$\h'$ we know that~$E(\h^*)\subseteq E(\h')\subseteq E(\h)$ since~$X\subseteq A'^{c_x}$ and~$A'\subseteq A$. Thus, by Lemma~$\ref{lemma:swhs}$, we have that~$(\h',k)$ is equivalent to~$(\h,k)$. Furthermore, by Lemma~$\ref{lemma:eqhs}$,~$(\h',k)$ is a yes-instance of~$s$-\textsc{Hitting Set} if and only if~$(\A',k)$ is a yes-instance of~$\pQ$. Thus~$(\A',k)$ and~$(\A,k)$ are equivalent instances of~$\pQ$.

We conclude the proof by giving an upper bound on the size of~$(\A',k)$ that is polynomial in~$k$. The set~$X$ contains at most~$|E(\h^*)|\in O(k^s)$ tuples. These tuples have no more than~$c_x\cdot|E(\h^*)|$ different components. Hence the size of~$A'$ is~$O(c_x\cdot k^s)=O(k^s)$. Thus the size of~$(\A',k)$ is~$O(k^{sm})$, where~$m$ is the largest arity of a relation~$R_i$. The values~$c_x$,~$s$, and~$m$ are constants that are independent of the input~$(\A,k)$. Thus~$(\A',k)$ is an instance equivalent to~$(\A,k)$ with size polynomial in~$k$.
\end{proof}

\section{Polynomial kernelization for $\MNP$}\label{section:maxnp}

We prove that the standard parameterization of any problem in~$\MNP$ admits a polynomial kernelization. The class~$\MNP$ was introduced by Papadimitriou and Yannakakis in~\cite{Papadimitriou1991}. They showed that every problem in~$\MNP$ is constant-factor approximable.

Throughout the section let~$\Q\in\MNP$ be an optimization problem on finite structures of type~$(r_1,\dots,r_t)$. Let~$R_1,\dots,R_t$ be relation symbols of arity~$r_1,\dots,r_t$ and let~$\S=(S_1,\dots,S_u)$ be a tuple of relation symbols of arity~$s_1,\dots,s_u$. Let~$\psi(\x,\y,\S)$ be a formula in disjunctive normal form over the vocabulary~$\lbrace R_1,\dots,R_t,S_1,\dots,S_u\rbrace$ on variables~$\lbrace x_1,\dots,x_{c_x},y_1,\dots,y_{c_y}\rbrace$ such that for all finite structures~$\A$ of type~$(r_1,\dots,r_t)$:
$$\opt_\Q(\A)=\max_\S|\lbrace\x\in A^{c_x}:(\A,\S)\models(\exists\y\in A^{c_y}):\psi(\x,\y,\S)\rbrace|.$$
Let~$s$ be the maximum number of occurrences of relations~$S_1,\dots,S_u$ in any disjunct of~$\psi(\x,\y,\S)$. The standard parameterization~$\pQ$ of~$\Q$ is the following problem:

\begin{flushleft}
	\begin{tabular}{ll}
	\textbf{Input:}&A finite structure~$\A$ of type~$(r_1,\dots,r_t)$ and an integer~$k$.\\
	\textbf{Parameter:}&$k$.\\
	\textbf{Task:}& Decide whether~$\opt_\Q(\A)\geq k$.
\end{tabular}
\end{flushleft}

Similarly to the previous section, we consider the formula~$\psi(\x,\y,\S)$ and a fixed instance~$(\A,k)$ with~$\A=(A,R_1,\dots,R_t)$. We select tuples~$\x\in A^{c_x}$ and~$\y\in A^{c_y}$ and evaluate all literals of the form~$R_i(\z)$ and~$\neg R_i(\z)$ for some~$\z\in\lbrace x_1,\dots,x_{x_c},y_1,\dots,y_{c_y}\rbrace^{r_i}$. By checking whether~$\z\in R_i$ we obtain~$1$ (true) or~$0$ (false) for each literal. Since~$\psi(\x,\y,\S)$ is in disjunctive normal form, we delete all occurrences of~$1$ from the disjuncts and delete all disjuncts that contain a~$0$. Furthermore, we delete all disjuncts that contain contradicting literals~$S_j(\z),\neg S_j(\z)$ since they cannot be satisfied. We explicitly allow empty disjuncts that are satisfied by definition for the sake of simplicity (they occur when all literals in a disjunct are evaluated to~$1$). We obtain an equivalent formula that we denote with~$\psi_{\x,\y}(\S)$.

\begin{remark}\label{remark:personalize}
For all~$\x$,~$\y$, and~$\S$ it holds that $(\A,\S)\models\psi(\x,\y,\S)$ iff~$(\A,\S)\models\psi_{\x,\y}(\S)$. Moreover, we can compute all formulas~$\psi_{\x,\y}(\S)$ for~$\x\in A^{c_x}$,~$\y\in A^{c_y}$ in polynomial time, since~$c_x$,~$c_y$, and the length of~$\psi(\x,\y,\S)$ are constants independent of~$\A$.
\end{remark}

\begin{defi}Let~$\A=(A,R_1,\dots,R_t)$ be a finite structure of type~$(r_1,\dots,r_t)$.

\noindent(a) We define~$X_\A\subseteq A^{c_x}$ as the set of all tuples~$\x$ such that~$(\exists\y):\psi_{\x,\y}(\S)$ holds for some~$\S$:
$$X_\A=\lbrace\x:(\exists\S):(\A,\S)\models(\exists\y):\psi_{\x,\y}(\S)\rbrace.$$
\noindent(b) For~$\x\in A^{c_x}$ we define~$Y_\A(\x)$ as the set of all tuples~$\y$ such that~$\psi_{\x,\y}(\S)$ holds for some~$\S$:
$$Y_\A(\x)=\lbrace\y:(\exists\S):(\A,\S)\models\psi_{\x,\y}(\S)\rbrace.$$
\end{defi}

\begin{remark}
The sets~$X_\A$ and~$Y_\A(\x)$ can be computed in polynomial time since the number of tuples~$\x\in A^{c_x}$ and~$\y\in A^{c_y}$ is polynomial in the size of~$A$ and~$\psi(\x,\y,\S)$ is of constant length independent of~$\A$.
\end{remark}

\begin{lem}\label{lemma:claim1}
Let~$(\A,k)$ be an instance of~$\pQ$. If~$|X_\A|\geq k\cdot2^s$ then~$\opt_\Q(\A)\geq k$, i.e.~$(\A,k)$ is a yes-instance.
\end{lem}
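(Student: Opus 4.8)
The plan is to show that when $|X_\A| \geq k \cdot 2^s$ we can explicitly construct a single tuple $\S$ of relations witnessing $\opt_\Q(\A) \geq k$, i.e.\ such that at least $k$ tuples $\x \in A^{c_x}$ satisfy $(\A,\S)\models(\exists\y):\psi(\x,\y,\S)$. For each $\x \in X_\A$ there exists, by definition of $X_\A$ and Remark~\ref{remark:personalize}, a witness pair $(\y_\x,\S_\x)$ with $(\A,\S_\x)\models\psi_{\x,\y_\x}(\S_\x)$. Since $\psi_{\x,\y_\x}(\S)$ is a disjunction, at least one disjunct $D_\x$ of it is satisfied by $\S_\x$; this disjunct is a conjunction of at most $s$ literals over $S_1,\dots,S_u$ (the $R_i$-literals having been evaluated away), and by construction it contains no contradicting pair. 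Thus $D_\x$ imposes on $\S$ a consistent list of at most $s$ constraints of the form ``$\z \in S_j$'' or ``$\z \notin S_j$''; call this the \emph{type} of $\x$. Each such constraint has two polarities and there are at most $s$ of them, so there are at most $2^s$ possible types.

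First I would invoke the pigeonhole principle: since $|X_\A| \geq k\cdot 2^s$ and there are at most $2^s$ types, some type $\tau$ is shared by at least $k$ tuples $\x \in X_\A$. Let $X' \subseteq X_\A$ be a set of exactly $k$ such tuples. The key observation is that the constraints coming from a single type $\tau$ are mutually consistent — not just pairwise, but as a whole, because they all originate from one satisfiable disjunct. Hence I can define $\S$ by setting, for each relation symbol $S_j$, the relation $S_j$ to consist of exactly those tuples $\z$ that $\tau$ requires to be in $S_j$ (and nothing else, so the ``$\notin$'' constraints are automatically met). This $\S$ satisfies every disjunct $D_\x$ for $\x \in X'$ simultaneously.

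Then I would conclude: for each $\x \in X'$, the witness $\y_\x$ together with this common $\S$ gives $(\A,\S)\models\psi_{\x,\y_\x}(\S)$, hence $(\A,\S)\models(\exists\y):\psi_{\x,\y}(\S)$, and by Remark~\ref{remark:personalize} also $(\A,\S)\models(\exists\y):\psi(\x,\y,\S)$. Therefore the set counted in $\opt_\Q(\A)=\max_\S|\{\x:(\A,\S)\models(\exists\y):\psi(\x,\y,\S)\}|$ contains all of $X'$, so $\opt_\Q(\A) \geq |X'| = k$, which is what we wanted.

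The main obstacle — or rather the one point that needs care — is verifying that the $\leq s$ constraints of a fixed type really are jointly satisfiable by a single $\S$, and that choosing $S_j$ minimally (only the forced-in tuples) does not accidentally violate a ``$\notin$'' constraint of that same type. This is exactly where we use that contradicting disjuncts were discarded when forming $\psi_{\x,\y}(\S)$: a type never contains both $\z \in S_j$ and $\z \notin S_j$, so the minimal choice is consistent with $\tau$. Everything else is bookkeeping: the count $2^s$ of types, the pigeonhole step, and the fact (Remark~\ref{remark:personalize}) that $\psi_{\x,\y}$ and $\psi$ are interchangeable.
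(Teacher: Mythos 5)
There is a genuine gap in the pigeonhole step, and it stems from an ambiguity in what you mean by a \emph{type}. You define the type of~$\x$ as the concrete list of at most~$s$ constraints ``$\z\in S_j$'' or ``$\z\notin S_j$'' imposed by the chosen disjunct~$D_\x$. These constraints mention specific tuples~$\z$ built from the components of~$\x$ (and~$\y_\x$), so the number of distinct such lists grows with~$|A|$; it is not bounded by~$2^s$. The bound~$2^s$ is only correct if a ``type'' records the \emph{sign pattern} of the~$\leq s$ $\S$-literals in the chosen disjunct, i.e.\ which positions are negated, while forgetting the actual tuples~$\z$.

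With that weaker notion of type, however, the construction of~$\S$ fails. Two tuples~$\x,\x'$ of the same sign pattern may impose conflicting concrete constraints: $D_\x$ may require~$\z\in S_j$ while~$D_{\x'}$ requires the same~$\z\notin S_j$. (A disjunct such as~$S_1(x_1)\wedge\neg S_1(x_2)$ instantiated at~$\x=(a,b)$ and~$\x'=(b,a)$ already does this.) Taking~$S_j$ to be the union of all positively required tuples then violates the negative constraint of~$D_{\x'}$, so the chosen~$\S$ does not satisfy all disjuncts of the same type. Your justification---``the constraints coming from a single type~$\tau$ are mutually consistent \dots because they all originate from one satisfiable disjunct''---is where this slips: the~$k$ constraints come from~$k$ different instantiated disjuncts, each individually satisfiable, not from a single jointly satisfiable one.

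The paper's proof sidesteps this by a probabilistic argument rather than pigeonhole. It fixes one satisfiable formula~$\psi_{\x,\y}(\S)$ per~$\x\in X_\A$ and invokes the derandomized random-assignment bound of Papadimitriou and Yannakakis: a uniformly random~$\S$ satisfies each such formula with probability at least~$2^{-s}$ (since some disjunct has at most~$s$ $\S$-literals, which can be forced while the remaining choices are free), so one can deterministically compute~$\S$ satisfying at least~$|X_\A|\cdot 2^{-s}\geq k$ of them. The expectation argument is immune to the cross-constraint conflicts that break the pigeonhole route, which is exactly why~$2^s$ appears as a probability bound there rather than as a count of equivalence classes.
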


\begin{proof}
The lemma can be concluded from the proof of the constant-factor approximability of problems in~$\MNP$ in~\cite{Papadimitriou1991}. For each~$\x\in X_\A$ we fix a tuple~$\y\in Y_\A(\x)$ such that~$\psi_{\x,\y}(\S)$ is satisfiable. This yields~$|X_\A|$ formulas, say~$\psi_1,\dots,\psi_{|X_\A|}$. Papadimitriou and Yannakakis showed that one can efficiently compute an assignment that satisfies at least~$\sum{f_i}$ of these formulas, where~$f_i$ is the fraction of all assignments that satisfies~$\psi_i$.

To see that~$f_i\geq 2^{-s}$; consider such a formula~$\psi_i$. Since~$\psi_i$ is satisfiable there exists a satisfiable disjunct. To satisfy a disjunct of at most~$s$ literals, at most~$s$ variables need to be assigned accordingly. Since the assignment to all other variables can be arbitrary this implies that~$f_i\geq2^{-s}$. Thus we have that~$\sum{f_i}\geq|X_\A|\cdot2^{-s}$. Therefore~$|X_\A|\geq k\cdot2^s$ implies that the assignment satisfies at least~$k$ formulas, i.e.\ that~$\opt_\Q(\A)\geq k$.
\end{proof} 

Henceforth we assume that~$|X_\A|<k\cdot2^s$.

\begin{defi}\label{def:sod}
Let~$(\A,k)$ be an instance of~$\pQ$ with~$\A=(A,R_1,\dots,R_t)$. For~$\x\in A^{c_x}$ we define~$D_\A(\x)$ as the set of all disjuncts of~$\psi_{\x,\y}(\S)$ for~$\y\in Y_\A(\x)$.
\end{defi}

\begin{defi}
We define the \emph{intersection of two disjuncts} as the conjunction of all literals that occur in both disjuncts. A \emph{sunflower of a set of disjuncts} is a subset  such that each pair of disjuncts in the subset has the same intersection (modulo permutation of the literals).
\end{defi}

\begin{remark}
The size of each~$D_\A(\x)$ is bounded by the size of~$Y_\A(\x)\subseteq A^{c_y}$ times the number of disjuncts of~$\psi(\x,\y,\S)$ which is a constant independent of~$\A$. Thus the size of each~$D_\A(\x)$ is bounded by a polynomial in the input size. The definition of intersection and sunflowers among disjuncts is a direct analog that treats disjuncts as sets of literals.
\end{remark}

\begin{defi}
A \emph{partial assignment} is a set~$L$ of literals such that no literal is the negation of another literal in~$L$. A formula is \emph{satisfiable under}~$L$ if there exists an assignment that satisfies the formula and each literal in~$L$.
\end{defi}

\begin{prop}\label{prop:dstar}
Let~$(\A,k)$ be an instance of~$\pQ$. For each~$\x\in A^{c_x}$ there exists a set~$D^*_\A(\x)\subseteq D_\A(\x)$ of cardinality~$O(k^s)$ such that:
\begin{enumerate}
 \item For every partial assignment~$L$ of at most~$sk$ literals,~$D^*_\A(\x)$ contains a disjunct satisfiable under~$L$, if and only if~$D_\A(\x)$ contains a disjunct satisfiable under~$L$.
 \item $D^*_\A(\x)$ can be computed in time polynomial in~$|\A|$.
\end{enumerate}
\end{prop}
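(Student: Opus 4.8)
The plan is to adapt the sunflower-based kernelization of Theorem~\ref{thm:weakkernel} from hypergraph edges to the disjuncts collected in~$D_\A(\x)$. After all literals $R_i(\z),\neg R_i(\z)$ have been evaluated away and all disjuncts containing a complementary pair of $S$-literals discarded, every disjunct in $D_\A(\x)$ is a consistent conjunction of at most~$s$ literals of the form $S_j(\z)$ or $\neg S_j(\z)$. Identifying each disjunct with its set of literals, $D_\A(\x)$ becomes (the edge set of) a hypergraph of dimension~$s$ on the set of such literals, and the notions of intersection of disjuncts and sunflower of disjuncts defined above coincide with the hypergraph notions of core and sunflower. The key combinatorial observation I would record first is that a disjunct~$d$ is satisfiable under a partial assignment~$L$ if and only if $d\cup L$ contains no literal together with its negation; since $d$ and $L$ are individually consistent, this is equivalent to $d\cap\bar L=\emptyset$, where $\bar L=\lbrace\bar\ell:\ell\in L\rbrace$ is the set of negations of the literals of~$L$.

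Next I would establish a single reduction step. Suppose $D_\A(\x)$ contains a sunflower $\lbrace d_1,\dots,d_{sk+2}\rbrace$ of cardinality $sk+2$ with core~$C$, and let $D'=D_\A(\x)\setminus\lbrace d_{sk+2}\rbrace$. I claim that for every partial assignment~$L$ with $|L|\le sk$, $D'$ contains a disjunct satisfiable under~$L$ if and only if $D_\A(\x)$ does. The forward direction is immediate since $D'\subseteq D_\A(\x)$. For the converse, take $d\in D_\A(\x)$ satisfiable under~$L$; if $d\ne d_{sk+2}$ then $d\in D'$ and we are done, so assume $d_{sk+2}$ is satisfiable under~$L$, i.e.\ $d_{sk+2}\cap\bar L=\emptyset$, hence $C\cap\bar L=\emptyset$ as $C\subseteq d_{sk+2}$. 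The petals $d_1\setminus C,\dots,d_{sk+1}\setminus C$ are pairwise disjoint (being the petals of the sunflower $\lbrace d_1,\dots,d_{sk+1}\rbrace$ of cardinality $sk+1$, all of whose disjuncts lie in~$D'$), so at most $|\bar L|\le sk$ of them meet~$\bar L$; thus some petal $d_i\setminus C$ with $1\le i\le sk+1$ is disjoint from~$\bar L$, and together with $C\cap\bar L=\emptyset$ this gives $d_i\cap\bar L=\emptyset$, so $d_i\in D'$ is satisfiable under~$L$. The reasoning is insensitive to whether any disjunct is empty, so empty disjuncts require no special treatment.

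Finally I would iterate the reduction. Set $D^*_\A(\x):=D_\A(\x)$ and, as long as $|D^*_\A(\x)|>(sk+1)^s\cdot s!\cdot s$, apply the Sunflower Corollary in the fixed dimension~$s$ to find a sunflower of cardinality $sk+2$ in time polynomial in $|D^*_\A(\x)|$ and delete one of its disjuncts as above. Each deletion preserves exactly property~(1), so property~(1) holds for the final set $D^*_\A(\x)$, whose cardinality is then at most $(sk+1)^s\cdot s!\cdot s=O(k^s)$ since $s$ is a constant independent of the input. As $|D_\A(\x)|$ is polynomial in~$|\A|$ and every step removes one disjunct, there are polynomially many steps, which gives property~(2).

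The step I expect to be the main obstacle is conceptual rather than technical: recognizing that ``satisfiable under~$L$'' is the purely set-theoretic condition $d\cap\bar L=\emptyset$ on the literal set of~$d$. Once this is in place, the rest is the same pigeonhole argument as for $s$-\textsc{Hitting Set}, the only change being that one works with sunflowers of cardinality $sk+2$ rather than $k+2$, because a partial assignment of at most $sk$ literals can ``block'' up to $sk$ of the disjoint petals, whereas a hitting set of size~$k$ can only occupy~$k$ of them.
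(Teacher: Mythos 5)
Your proof is correct and takes essentially the same approach as the paper: iterate the Sunflower Corollary on the disjuncts viewed as a hypergraph of dimension~$s$, reduce sunflowers of cardinality $sk+2$ to $sk+1$, and argue by pigeonhole that the deleted disjunct is redundant for every partial assignment of at most $sk$ literals. The only cosmetic difference is where you apply the pigeonhole: you count petals missed by $\bar L$ (after first observing $C\cap\bar L=\emptyset$), whereas the paper counts which literal of $L$ must contradict two of the $sk+1$ remaining disjuncts and hence lies in (the negation of) the core; the two arguments are interchangeable.
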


\begin{proof}
Let~$\A=(A,R_1,\dots,R_t)$ be a finite structure of type~$(r_1,\dots,r_t)$, let~$\x\in A^{c_x}$, and let~$D_\A(\x)$ be a set of disjuncts according to Definition~$\ref{def:sod}$. From the Sunflower Corollary we can derive a polynomial-time algorithm that computes a set~$D^*_\A(\x)$ by successively shrinking sunflowers. We start by setting~$D^*_\A(\x)=D_\A(\x)$ and apply the following step while the cardinality of~$D^*_\A(\x)$ is greater than~$(sk+1)^s\cdot s!\cdot s$.

We compute a sunflower of cardinality~$sk+2$, say~$F=\lbrace f_1,\dots,f_{sk+2}\rbrace$, in time polynomial in~$|D^*_\A(\x)|$ (Sunflower Corollary). We delete a disjunct of~$F$, say~$f_{sk+2}$, from~$D^*_\A(\x)$. Let~$O$ and~$P$ be copies of~$D^*_\A$ before respectively after deleting~$f_{sk+2}$. Observe that~$F'=F\setminus\lbrace f_{sk+2}\rbrace$ is a sunflower of cardinality~$sk+1$ in~$P$. Let~$L$ be a partial assignment of at most~$sk$ literals and assume that no disjunct in~$P$ is satisfiable under~$L$. This means that for each disjunct of~$P$ there is a literal in~$L$ that contradicts it, i.e.\ a literal that is the negation of a literal in the disjunct. We focus on the sunflower~$F'$ in~$P$. There must be a literal in~$L$, say~$l$, that contradicts at least two disjuncts of~$F'$, say~$f$ and~$f'$, since~$|F'|=sk+1$ and~$|L|\leq sk$. Therefore~$l$ is the negation of a literal in the intersection of~$f$ and~$f'$, i.e.\ the core of~$F'$. Thus~$l$ contradicts also~$f_{sk+2}$ and we conclude that no disjunct in~$O=P\cup\lbrace f_{sk+2}\rbrace$ is satisfiable under the partial assignment~$L$. The reverse argument holds since all disjuncts of~$P$ are contained in~$O$. Thus each step maintains the desired property~$(1)$.

At the end~$D^*_\A(x)$ contains no more than~$(sk+1)^s\cdot s!\cdot s\in O(k^s)$ disjuncts. For each~$\x$ this takes time polynomial in the size of the input since the cardinality of~$D_\A(\x)$ is bounded by a polynomial in the input size and a disjunct is deleted in each step.
\end{proof}

\begin{lem}\label{lemma:sandwich} 
Let~$D'_\A(\x)$ be a subset of~$D_\A(\x)$ such that~$D^*_\A(\x)\subseteq D'_\A(\x)\subseteq D_\A(\x)$. For any partial assignment~$L$ of at most~$sk$ literals it holds that~$D_\A(\x)$ contains a disjunct satisfiable under~$L$ if and only if~$D'_\A(\x)$ contains a disjunct satisfiable under~$L$.
\end{lem}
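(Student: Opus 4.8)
The plan is to mimic the argument of Lemma~\ref{lemma:swhs} (the ``sandwich'' lemma for $s$-\textsc{Hitting Set}), using the fact that ``containing a disjunct satisfiable under $L$'' is monotone under taking supersets of the disjunct collection, together with the equivalence already established in Proposition~\ref{prop:dstar}~(1). Concretely, fix a partial assignment $L$ of at most $sk$ literals. First I would observe the easy implication: if $D'_\A(\x)$ contains a disjunct $f$ satisfiable under $L$, then since $D'_\A(\x)\subseteq D_\A(\x)$ the same $f$ lies in $D_\A(\x)$, so $D_\A(\x)$ contains a disjunct satisfiable under $L$. The identical reasoning gives that if $D^*_\A(\x)$ contains a disjunct satisfiable under $L$ then so does $D'_\A(\x)$, using $D^*_\A(\x)\subseteq D'_\A(\x)$.

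For the remaining direction I would invoke Proposition~\ref{prop:dstar}~(1): since $|L|\le sk$, the collection $D^*_\A(\x)$ contains a disjunct satisfiable under $L$ if and only if $D_\A(\x)$ does. Chaining the three facts yields the cycle of implications
\[
D_\A(\x)\text{ has a disjunct sat.\ under }L \;\Longleftrightarrow\; D^*_\A(\x)\text{ does}\;\Longrightarrow\; D'_\A(\x)\text{ does}\;\Longrightarrow\; D_\A(\x)\text{ does},
\]
so all four statements (for $D_\A(\x)$, $D^*_\A(\x)$, $D'_\A(\x)$, and back to $D_\A(\x)$) are equivalent; in particular $D_\A(\x)$ contains such a disjunct iff $D'_\A(\x)$ does, which is exactly the claim. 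Since $L$ was an arbitrary partial assignment of at most $sk$ literals, the lemma follows.

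There is no real obstacle here: the statement is a formal consequence of set monotonicity plus Proposition~\ref{prop:dstar}, and the only thing to be careful about is keeping the bound $|L|\le sk$ throughout so that Proposition~\ref{prop:dstar}~(1) actually applies. The reason it is stated separately is the same as for Lemma~\ref{lemma:swhs}: in the subsequent kernelization one will produce an intermediate collection $D'_\A(\x)$ (arising from a restricted substructure $\A'$) that is sandwiched between $D^*_\A(\x)$ and $D_\A(\x)$, and this lemma is precisely what lets one transfer equivalence back to the original instance.
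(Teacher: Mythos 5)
Your proof is correct and takes essentially the same approach as the paper: both establish the claim via the same cycle of implications, using monotonicity of ``contains a disjunct satisfiable under~$L$'' for the inclusions $D^*_\A(\x)\subseteq D'_\A(\x)\subseteq D_\A(\x)$ and invoking Proposition~\ref{prop:dstar}~(1) to close the cycle. Your remark about keeping $|L|\le sk$ throughout is the right thing to be careful about, and matches the paper's implicit use of that bound.
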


\begin{proof}
Let~$L$ be a partial assignment of at most~$sk$ literals. If~$D_\A(\x)$ contains a disjunct satisfiable under~$L$, then, by Proposition~$\ref{prop:dstar}$, this holds also for~$D^*_\A(\x)$. For~$D_\A^*(\x)$ and~$D'_\A$ this holds since~$D^*_\A(\x)\subseteq D'_\A(\x)$. The same is true for~$D'_\A(\x)$ and~$D_\A(\x)$.
\end{proof}

\begin{thm}\label{theorem:maxnpkernel}
Let~$\Q\in\MNP$. The standard parameterization~$\pQ$ of~$\Q$ admits a polynomial kernelization.
\end{thm}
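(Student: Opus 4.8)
The plan is to carry out the analogue of the $\MINF_1$ argument one level up, working with the disjunct sets $D_\A(\x)$ in place of hyperedges. Given an instance $(\A,k)$, I would first compute $X_\A$: if $|X_\A|\ge k\cdot 2^s$, then $(\A,k)$ is a yes-instance by Lemma~\ref{lemma:claim1} and I output a fixed trivial yes-instance. So assume $|X_\A|<k\cdot 2^s$.

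Next I would build a small structure $\A'$. For each $\x\in X_\A$ compute $D^*_\A(\x)$ using Proposition~\ref{prop:dstar}; it has $O(k^s)$ disjuncts, and by Definition~\ref{def:sod} each disjunct $d\in D^*_\A(\x)$ occurs in $\psi_{\x,\y}(\S)$ for some $\y\in Y_\A(\x)$, so fix one such witness $\y_{\x,d}$. Let $A'\subseteq A$ be the set of all components of all tuples in $X_\A$ together with all components of all the fixed tuples $\y_{\x,d}$, let $R'_i$ be $R_i$ restricted to $A'$, and set $\A'=(A',R'_1,\dots,R'_t)$. Since $|X_\A|<k\cdot 2^s$ and each $\x$ contributes at most $c_x$ components plus at most $O(k^s)$ witnesses of at most $c_y$ components each, $|A'|=O(k^{s+1})$, hence $\A'$ has encoding size $O(k^{(s+1)m})$ with $m=\max_i r_i$, polynomial in $k$ as $s,m,c_x,c_y,t$ are constants; all steps are polynomial-time by the earlier remarks and Proposition~\ref{prop:dstar}.

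The heart is showing $(\A,k)$ and $(\A',k)$ equivalent. I would first record the combinatorial characterization: for any structure $\mathcal B$ of the right type, $\opt_\Q(\mathcal B)\ge k$ holds iff there are $k$ distinct tuples $\x_1,\dots,\x_k\in X_\mathcal B$ and disjuncts $d_i\in D_\mathcal B(\x_i)$ such that $\bigcup_i d_i$ is a consistent set of $\S$-literals. (Forward: pick a witnessing $\S$, $k$ tuples it covers, and a disjunct of each relevant $\psi_{\x_i,\y}(\S)$ made true by $\S$, via Remark~\ref{remark:personalize}. Backward: extend the consistent set $\bigcup_i d_i$ to a total assignment of the $S_j$.) Since $R$-literals evaluate identically over $A'$ and $A$, the formula $\psi_{\x,\y}$ is unchanged when read in $\A'$ for $\x,\y$ over $A'$; this gives $X_{\A'}=X_\A$ and $D^*_\A(\x)\subseteq D_{\A'}(\x)\subseteq D_\A(\x)$ for every $\x\in X_\A$ (first inclusion: the $\y_{\x,d}$ were added to $A'$; second: $Y_{\A'}(\x)\subseteq Y_\A(\x)$). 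Now a witness for $\opt_\Q(\A')\ge k$ lives in $D_{\A'}(\x)\subseteq D_\A(\x)$ with $\x\in X_{\A'}\subseteq X_\A$ and is at once a witness for $\opt_\Q(\A)\ge k$. For the other direction I would take a witness $(\x_i,d_i)_i$ for $\opt_\Q(\A)\ge k$ and replace the $d_i$ one at a time: when treating $\x_j$, let $L_j$ be the union of the other $k-1$ disjuncts (those already replaced and those not yet replaced), so $|L_j|\le s(k-1)\le sk$ and $d_j$ is satisfiable under $L_j$ because $\{d_j\}\cup L_j$ is exactly the current (consistent) union; then Lemma~\ref{lemma:sandwich} (with $D'_\A(\x_j)=D_{\A'}(\x_j)$) provides a disjunct $d_j'\in D_{\A'}(\x_j)$ satisfiable under $L_j$, keeping the union consistent. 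After $k$ steps $(\x_i,d_i')_i$ is a witness with $d_i'\in D_{\A'}(\x_i)$ and $\x_i\in X_{\A'}$, so $\opt_\Q(\A')\ge k$.

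I expect the main obstacle to be this disjunct-swapping step: getting the combinatorial characterization right so that a witness uses at most $k$ disjuncts and hence at most $sk$ literals, and then checking that fixing all but one of them leaves a partial assignment of size $\le sk$ — precisely the range in which Proposition~\ref{prop:dstar} and Lemma~\ref{lemma:sandwich} are stated, which is why the sunflowers shrunk there have cardinality $sk+2$. The verification that passing to $A'$ changes neither the evaluated formulas nor $X_\A$ (needed for $X_{\A'}=X_\A$ and the sandwich inclusions) is routine but should be spelled out.
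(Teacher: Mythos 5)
Your proposal is correct and follows essentially the same route as the paper: reduce each $D_\A(\x)$ to $D^*_\A(\x)$ via sunflowers (Proposition~\ref{prop:dstar}), pick one witness~$\y$ per surviving disjunct, build~$\A'$ from those components, and transfer a witness from~$\A$ to~$\A'$ by swapping disjuncts one at a time using Lemma~\ref{lemma:sandwich} with a partial assignment of at most~$sk$ literals, together with the $|X_\A|<k\cdot 2^s$ dispatch from Lemma~\ref{lemma:claim1} to bound~$|A'|$ by~$O(k^{s+1})$. The only (cosmetic) differences are that you make the implicit combinatorial characterization of~$\opt_\Q\ge k$ as ``$k$ distinct~$\x_i$ with a consistent union of chosen disjuncts'' explicit, and you take~$D_{\A'}(\x)$ as the sandwiched middle set rather than the paper's bespoke~$D'_\A(\x)$ — both lie between~$D^*_\A(\x)$ and~$D_\A(\x)$, so Lemma~\ref{lemma:sandwich} applies either way.
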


\begin{proof}
The proof is organized in three parts. First, given an instance~$(\A,k)$ of~$\pQ$, we construct an instance~$(\A',k)$ of~$\pQ$ in time polynomial in the size of~$(\A,k)$. In the second part, we prove that~$(\A,k)$ and~$(\A',k)$ are equivalent. In the third part, we conclude the proof by showing that the size of~$(\A',k)$ is bounded by a polynomial in~$k$.

\noindent(I.) Let~$(\A,k)$ be an instance of~$\pQ$. We use the sets~$D_\A(\x)$ and~$D^*_\A(\x)$ according to Definition~$\ref{def:sod}$ and Proposition~$\ref{prop:dstar}$. Recall that~$D_\A(\x)$ is the set of all disjuncts of~$\psi_{\x,\y}(\S)$ for~$\y\in Y_\A(\x)$. Thus, for each disjunct~$d\in D^*_\A(\x)\subseteq D_\A(\x)$, we can select a~$\y_d\in Y_\A(\x)$ such that~$d$ is a disjunct of~$\psi_{\x,\y_d}(\S)$. Let~$Y'_\A(\x)\subseteq Y_\A(\x)$ be the set of these selected tuples~$\y_d$. Let~$D'_\A(\x)$ be the set of all disjuncts of~$\psi_{\x,\y}(\S)$ for~$\y\in Y'_\A(\x)$. Since~$D^*_\A(\x)$ contains some disjuncts of~$\psi_{\x,\y}(\S)$ for~$\y\in Y'_\A(\x)$ and~$D_\A(\x)$ contains all disjuncts of~$\psi_{\x,\y}(\S)$ for~$\y\in Y_\A(\x)\supseteq Y'_\A(\x)$, we have that~$D^*_\A(\x)\subseteq D'_\A(\x)\subseteq D_\A(\x)$.

For each~$\x$ this takes time~$O(|D^*_\A(\x)|\cdot|Y^*_\A(\x)|)\subseteq O(k^s\cdot|A|^{c_y})$. Computing~$Y'_\A(\x)$ for all~$\x\in A^{c_x}$ takes time~$O(|A|^{c_x}\cdot k^s\cdot|A|^{c_y})$, i.e.\ time polynomial in the size of~$(\A,k)$ since~$k$ is never larger than~$|A|^{c_x}$.\footnote[4]{That is,~$(\A,k)$ is a no-instance if~$k>|A|^{c_x}$ since~$k$ exceeds the number of tuples~$\x\in A^{c_x}$.}

Let~$A'\subseteq A$ be the set of all components of~$\x\in X_\A$ and~$\y\in Y'_\A(\x)$ for all~$\x\in X_\A$. This ensures that~$X_\A\subseteq(A')^{c_x}$ and~$Y'_\A(\x)\subseteq(A')^{c_y}$ for all~$\x\in X_\A$. Let~$R'_i$ be the restriction of~$R_i$ to~$A'$ and let~$\A'=(A',R'_1,\dots,R'_t)$.

\noindent(II.) We will now prove that~$\opt_\Q(\A)\geq k$ if and only if~$\opt_\Q(\A')\geq k$, i.e.\ that~$(\A,k)$ and~$(\A',k)$ are equivalent. Assume that~$\opt_\Q(\A)\geq k$ and let~$\S=(S_1,\dots,S_u)$ such that~$|\lbrace\x:(\A,\S)\models(\exists\y):\psi(\x,\y,\S)\rbrace|\geq k$. This implies that there must exist tuples~$\x_1,\dots,\x_k\in A^{c_x}$ and~$\y_1,\dots,\y_k\in A^{c_y}$ such that~$\S$ satisfies~$\psi_{\x_i,\y_i}(\S)$ for~$i=1,\dots,k$. Thus~$\S$ must satisfy at least one disjunct in each~$\psi_{\x_i,\y_i}(\S)$ since these formulas are in disjunctive normal form. Accordingly let~$d_1,\dots,d_k$ be disjuncts such that~$\S$ satisfies the disjunct~$d_i$ in~$\psi_{\x_i,\y_i}(\S)$ for~$i=1,\dots,k$. We show that there exists~$\S'$ such that:
$$|\lbrace\x:(\A',\S')\models(\exists\y):\psi(\x,\y,\S')\rbrace|\geq k.$$

For~$p=1,\dots,k$ we apply the following step: If~$\y_p\in Y'_\A(\x_p)$ then do nothing. Otherwise consider the partial assignment~$L$ consisting of the at most~$sk$ literals of the disjuncts~$d_1,\dots,d_k$. The set~$D_\A(\x_p)$ contains a disjunct that is satisfiable under~$L$, namely~$d_p$. By Lemma~$\ref{lemma:sandwich}$, it follows that~$D'_\A(\x_p)$ also contains a disjunct satisfiable under~$L$, say~$d'_p$. Let~$\y'_p\in Y'_\A(\x_p)$ such that~$d'_p$ is a disjunct of~$\psi_{\x_p,\y'_p}(\S)$. Such a~$\y'_p$ can be found by selection of~$D'_\A(\x_p)$. Change~$\S$ in the following way to satisfy the disjunct~$d'_p$. For each literal of~$d'_p$ of the form~$S_i(\z)$ add~$\z$ to the relation~$S_i$. Similarly for each literal of the form~$\neg S_i(\z)$ remove~$\z$ from~$S_i$. This does not change the fact that~$\S$ satisfies the disjunct~$d_i$ in~$\psi_{\x_i,\y_i}(\S)$ for~$i=1,\dots,k$ since, by selection,~$d'_p$ is satisfiable under~$L$. Then we replace~$\y_p$ by~$\y'_p$ and~$d_p$ by~$d'_p$. Thus we maintain that~$\S$ satisfies~$d_i$ in~$\psi_{\x_i,\y_i}(\S)$ for~$i=1,\dots,k$.

After these steps we obtain~$\S$ as well as tuples~$\x_1,\dots,\x_k$,~$\y_1,\dots,\y_k$ with~$\y_i\in Y'_\A(\x_i)$, and disjuncts~$d_1,\dots,d_k$ such that~$\S$ satisfies~$d_i$ in~$\psi_{\x_i,\y_i}(\S)$ for~$i=1,\dots,k$. Let~$\S'$ be the restriction of~$\S$ to~$A'$. Then we have that~$(\A',\S')\models\psi_{\x_i,\y_i}(\S')$ for~$i=1,\dots,k$ since~$A'$ is defined to contain the components of tuples~$\x\in X_\A$ and of all tuples~$\y\in Y'_\A(\x)$ for~$\x\in X_\A$. Hence~$\x_i\in\lbrace\x:(\A',\S')\models(\exists\y):\psi(\x,\y,\S')\rbrace$ for~$i=1,\dots,k$. Thus~$\opt_\Q(\A')\geq k$.

For the reverse direction assume that~$\opt_\Q(\A')\geq k$. Since~$A'\subseteq A$ it follows that
$$\lbrace\x:(\A',\S')\models(\exists\y):\psi(\x,\y,\S')\rbrace\subseteq\lbrace\x:(\A,\S')\models(\exists\y):\psi(\x,\y,\S')\rbrace.$$
Thus~$|\lbrace\x:(\A,\S')\models(\exists\y):\psi(\x,\y,\S')\rbrace|\geq k$, implying that~$\opt_\Q(\A)\geq k$. Therefore~$\opt_\Q(\A)\geq k$ if and only if~$\opt_\Q(\A')\geq k$. Hence~$(\A,k)$ and~$(\A',k)$ are equivalent instances of~$\pQ$.

\noindent(III.) We conclude the proof by providing an upper bound on the size of~$(\A',k)$ that is polynomial in~$k$. For the sets~$Y'_\A(\x)$ we selected one tuple~$\y$ for each disjunct in~$D^*_\A(\x)$. Thus~$|Y'_\A(\x)|\leq|D^*(\x)|\in O(k^s)$ for all~$\x\in X_\A$. The set~$A'$ contains the components of tuples~$\x\in X_\A$ and of all tuples~$\y\in Y'_\A(\x)$ for~$\x\in X_\A$. Thus
$$\begin{array}{rcl}|A'|&\leq& c_x\cdot|X_\A|+c_y\cdot\sum_{\x\in X_\A}{|Y'_\A(\x)|}\\
&\leq&c_x\cdot|X_\A|+c_y\cdot|X_\A|\cdot O(k^s)\\
&<&c_x\cdot k\cdot2^s+c_y\cdot k\cdot2^s\cdot O(k^s)=O(k^{s+1}).\end{array}$$

For each relation~$R'_i$ we have~$|R'_i|\leq|A'|^{r_i}\in O(k^{(s+1)r_i})$. Thus the size of~$(\A',k)$ is bounded by~$O(k^{(s+1)m})$, where~$m$ is the largest arity of a relation~$R_i$.
\end{proof}

\begin{remark}
For~$\MSNP$ one can prove a stronger result that essentially relies on Lemma~$\ref{lemma:claim1}$. That way one obtains bounds for the sizes of~$A'$ and~$(\A',k)$ of~$O(k)$ and~$O(k^m)$ respectively.
\end{remark}

\section{Conclusion}\label{conclusion}

We have constructively established that the standard parameterizations of problems in~$\MINF_1$ and~$\MNP$ admit polynomial kernelizations. Thus a strong relation between constant-factor approximability and polynomial kernelizability has been showed for two large classes of problems. It remains an open problem to give a more general result that covers all known examples (e.g.\ \textsc{Feedback Vertex Set}). It might be profitable to consider closures of~$\MSNP$ under reductions that preserve constant-factor approximability. Khanna et al.~\cite{Khanna1998} proved that~APX and~APX-PB are the closures of~$\MSNP$ under~PTAS-preserving reductions and~E-reductions, respectively. Since both classes contain \textsc{Bin Packing} which does not admit a polynomial kernelization, this leads to the question whether polynomial kernelizability or fixed-parameter tractability are maintained under restricted versions of these reductions.

\section*{Acknowledgement} We would like to thank Iyad Kanj for pointing out \textsc{Bin Packing} as an example that is constant-factor approximable but does not admit a polynomial kernelization.

\bibliographystyle{plain}
\bibliography{references}

\end{document}